\newtheorem{observation}{Observation}
\newtheorem{brule}{Branching Rule}
\newcommand{\akash}[1]{{\textcolor{purple}{[{\bf Akash:} #1]}}}
\newcommand{\adish}[1]{{\textcolor{red}{[{\bf Adish:} #1]}}}
\newcommand{\yuxin}[1]{{\textcolor{ForestGreen}{[{\bf Yuxin:} #1]}}}
\newcommand{\hank}[1]{{\textcolor{NavyBlue}{[{\bf Hank:} #1]}}}
\newcommand{\annotate}[1]{{\textcolor{blue}{[{\bf Note:} #1]}}}
\newcommand{\akash}[1]{}
\newcommand{\adish}[1]{}
\newcommand{\yuxin}[1]{}
\newcommand{\hank}[1]{}
\newcommand{\annotate}[1]{}
\newcommand{\Rmnum}[1]{\expandafter\@slowromancap\romannumeral #1@}
\newcommand{\secref}[1]{\S\ref{#1}}
\newcommand{\appref}[1]{Appendix \ref{#1}}
\newcommand{\thmref}[1]{Theorem~\ref{#1}}
\newcommand{\lemref}[1]{Lemma~\ref{#1}}
\newcommand{\obref}[1]{Observation~\ref{#1}}
\newcommand{\paren} [1] {\ensuremath{ \left( {#1} \right) }}
\newcommand{\bracket}[1]{\left[#1\right]}
\newcommand{\curlybracket}[1]{\ensuremath{\left\{#1\right\}}}
\newcommand{\bigO}[1]{\ensuremath{\mathcal{O}\paren{#1}}}
\newcommand{\cP}{{\mathcal{P}}}
\newcommand{\bc}{{\mathbf{c}}}
\renewcommand{\tt}[1]{\textit{#1}}
\def\BState{\State\hskip-\ALG@thistlm}
\begin{document}
\title{Deletion to Induced Matching}
%
%
\author{Akash Kumar\inst{1}\orcidID{0000-0002-0720-9320} \and
Mithilesh Kumar\inst{2}\orcidID{0000-0003-0077-2118}}
\authorrunning{A. Kumar et al.}
%
\institute{Max Planck Institute for Software Systems, Campus E1 5,
D-66123 Saarbr\"{u}cken, Germany\\
\email{akumar@mpi-sws.org} \and
Simula@UiB, Merkantilen, Thormøhlens gate 53D, N-5006 Bergen, Norway\\
\email{thesixthprime@gmail.com}}
\maketitle              
\begin{abstract}
In the \textsc{Deletion to induced matching} problem, we are given a graph $G$ on $n$ vertices, $m$ edges and a non-negative integer $k$ and asks whether there exists a set of vertices $S \subseteq V(G) $ such that $|S|\le k$ and the size of any connected component in $G-S$ is \textit{exactly} 2. In this paper, we provide a fixed-parameter tractable (FPT) \(O^*(1.748^{k})\) running time and polynomial space algorithm for the  \textsc{Deletion to induced matching} problem using \textit{branch}-and-\textit{reduce} strategy and path decomposition. We also extend our work to the exact-exponential version of the problem.

\keywords{Fixed Parameter Tractable  \and  Parameterized Algorithms \and Complexity Theory.}
\end{abstract}
\vspace{-1mm} 
\section{Introduction}
Hardness of a computation problem often depends on the class of the underlying graph. Given a graph, it becomes natural to check \emph{how far} is the graph from a specific graph class. One way to quantify this \emph{distance} is in terms of number of vertices that need to be deleted from the given graph such that the resultant graph belongs the desired graph class. \textsc{Deletion to Induced matching} is one such problem. Before we define this problem, let's take a brief detour to some of the well studied problems.

In the classic {\sc Vertex Cover} problem, the input is a graph $G$ and integer $k$, and the task is to determine whether there exists a vertex set $S$ of size at most $k$ such that every edge in $G$ has at least one endpoint in $S$. Such a set is called a {\em vertex cover} of the input graph $G$. An equivalent definition of a vertex cover is that every connected component of $G - S$ has at most $1$ vertex. This view of the {\sc Vertex Cover} problem gives rise to a natural generalization: can we delete at most $k$ vertices from $G$ such that every connected component in the resulting graph has at most $\ell$ vertices? $\mathrm{Kumar\, et\, al.}$~\cite{kumar_et_al:LIPIcs:2017:6934} studied this generalization as $\ell$-\textsf{COC} (\textsf{$\ell$-Component Order Connectivity}).
In this work, we would study a special case of this generalization where $\ell$ is \tt{exactly} 2. Formally, we consider the following problem, called \textsf{Deletion to Induced Matching} (\textsf{IND}).

\smallskip
\noindent
\fbox{\parbox{\textwidth-\fboxsep}{
\textsf{Deletion to Induced Matching} (\textsf{IND})\\
\textbf{Input:} A graph $G$ on $n$ vertices and $m$ edges, and a positive integer $k$.\\
\textbf{Task:} determine whether there exists a set $S\subseteq V(G)$ such that $|S|\leq k$ and the maximum size of a component in $G-S$ is \emph{exactly} 2.
}}
\smallskip

\noindent
Exact 2-\textsf{COC} is also known in the literature as \tt{induced matching} which is a well-studied problem. A maximum induced matching problem where the task is to find an induced matching of maximum size, has been shown to be solvable in polytime for various graph classes, e.g. trees \cite{GOLUMBIC2000157}, chordal graphs \cite{cam}, circular-arc graphs \cite{irred} and interval graphs \cite{GOLUMBIC2000157}. From the work of Stockmeyer and Vazirani~\cite{Stockmeyer1982NPCompletenessOS}, it is evident that \textsf{IND} is NP-complete. This motivates the study of \textsf{IND} within paradigms for coping with NP-hardness, such as approximation algorithms~\cite{approx_book}, exact exponential time algorithms~\cite{exact_book}, parameterized algorithms~\cite{pc_book,DowneyF13book} and kernelization~\cite{kernel_survey_kratsch}. In this work we focus on \textsf{IND} from the perspective of parameterized complexity and exact exponential algorithms. As our main result, we provide an algorithm that given an instance $(G, k)$ of \textsf{IND} such that degree of any vertex is at most 3, runs in polynomial time to output a path decomposition such that the path width is bounded by $\bigO{k}$. We provide an application of branching technique to convert an arbitrary instance $(G, k)$ of \textsf{IND} to $(G', k')$ such that degree of any vertex in $G'$ is at most 3.   

\medskip
\noindent
{\bf Related Work.} If the component size is 1, then the problem converts to finding a \textsf{VERTEX COVER}, which is extremely well studied from the perspective of approximation algorithms~\cite{approx_book,dinur2005hardness}, exact exponential time algorithms~\cite{FominGK09,Robson86,XiaoN13}, parameterized algorithms~\cite{pc_book,ChenKX10} and kernelization~\cite{ChenKJ01,NemhauserT74}. The relaxed version of \textsf{IND} where component sizes are bounded by 2 (also 2-\textsf{COC}~\cite{kumar_et_al:LIPIcs:2017:6934}) is also well studied, and has been considered under several different names. The problem, or rather the dual problem of finding a largest possible set $S$ that induces a subgraph in which every connected component has order at most 2, was first defined by Yannakakis~\cite{Yannakakis81a} under the name Dissociation Set. The problem has attracted attention in exact exponential time algorithms~\cite{KardosKS11,XiaoK15}, the fastest currently known algorithm~\cite{XiaoK15} has running time $O(1.3659^n)$. $2$-\textsf{COC} has also been studied from the perspective of parameterized algorithms~\cite{ChangCHRS16,Tu15} (under the name {\sc Vertex Cover} $P_3$) as well as approximation algorithms~\cite{Tu}. The fastest known parameterized algorithm, due to Tsur et al.~\cite{ChangCHRS16} has running time $1.713^kn^{O(1)}$, while the best approximation algorithm, due to Tu and Zhou~\cite{Tu} has factor $2$. 

\textsf{IND} has been studied in the literature as maximum induced matching (\textsf{MIM}) both in parameterized and exact-exponential settings. Xiao et al.~\cite{AIMpara,XIAO2020} claims a fixed pararmeter tractable algorithm of running time $1.748^kn^{O(1)}$ in the polynomial space. The exact exponential version of \textsf{IND} which could be stated as finding a maximum induced regular graph of degree 1, has been first studied by Saket el al.~\cite{regularinducedsub}. They have shown an improved exponential time algorithm of running time $O^{*}(1.4786^n)$. The running time was further improved to $O^{*}(1.4231^n)$ time and polynomial space by Xiao et al.~\cite{AIMexact}. Basavaraju et al.~\cite{basava} showed that all maximal induced matchings in a triangle-free graph can be listed in $O^{*}(1.4423^n)$ time. In our work, we achieve a running time of $1.748^kn^{O(1)}$ in the polynomial space for the parameterized problem \textsf{IND} using novel techniques from branching and path decomposition. Our solution is intuitive and uses a much simpler version of branching as has been used in previous works. These techniques could be leveraged to provide a simple $O^*(1.5009^{n})$-time exact algorithm for the exact exponential version of the problem, called \textsf{EXTEND} in polynomial space.
Using monotone local search for exact problems as shown in the seminal work of Saket et al.~\cite{monotone}, an $O^*(1.427^{n+o(n)})$ running time algorithm is also proposed.

\medskip
\noindent
{\bf Our Method.} The key to our solution is reducing an \textsf{IND} instance $(G,k)$ to $(G',k')$ using branch-and-reduce strategy such that the maximum degree of any vertex $u \in G'$ is 3. We achieve this by analysing certain properties of a solution set $S \subseteq V(G)$. Thus, we only apply the branch-and-reduce strategy on vertices with degree more than or equal to 4. Consider a search tree $T^*$ and the subtree rooted at some node with \textsf{IND} instance $(G',k')$. Now, for any $u \in V(G)$, either $u \in S$ or $u \notin S$. In the former case, the algorithm creates a child with instance $(G'[\paren{V(G')\setminus u}], k'-1)$. In the later case, we study the $e \in E(G')$ such that $e := uv$ for some $v \in N(u)$ (neighbours of $u$ in $G'$). Our branching rules cover all the possible cases of $\{u,v\} \not\subset S$ where $v \in N(u)$.

Now, the motivation behind reducing the \textsf{IND} instances to maximum degree 3 instances, call $(G',k')$, is because we can efficiently solve those instances using ideas from path decomposition. We are able to show that there exists a path decomposition of $G'$ with pathwidth having constant dependency on $k'$. Not only that we provide an algorithm that runs in polynomial time to construct one.

We note that Fomin and H\o{}ie~\cite{fominho} proved a tight bound on a path decomposition of a graph $G$ with maximum degree at most three. They show that for any $\epsilon > 0$, \textsf{pw}$(G)\leq  (\frac{1}{6} + \epsilon)|V (G)|$ (pathwidth of the graph, cf \secref{sec:prelim}). We first show that in the case of the special degree 3 graphs, we can bound the number of vertices of degree exactly 3 if $(G,k)$ is a \textsf{YES}-instance, and vice versa. Now the key is to show a path decomposition of the instance of width bounded by $O(k)$. The main insight is in using the decomposition given by Fomin and H\o{}ie~\cite{fominho}, call it $\mathcal{P}'$, and then applying color coding of edges and vertices to appropriately group the bags of the decomposition $\mathcal{P}'$ (cf \secref{sec:prelim}) to construct the desired path decomposition $\mathcal{P}$.

Now, the only question left to be answered is how to construct an efficient solution for the instance $(G,k)$? 
We provide a dynamic programming algorithm to solve an \textsf{IND} instance $(G,k)$ if a path decomposition of $G$ is given. We define a coloring map of any bag $X_t \in \mathcal{P}$ i.e. the path decomposition $\mathcal{P}$ of $G$ in the following manner: $f:X_t\to \{0,1,2\}$ assigning three different colors to vertices of the bag. For the path decomposition $\cP$ is represented as $(\mathcal{P}, \{X_t\}_{t\in V(\mathcal{P})})$. The idea is to dynamically find partial solution $S_t \subseteq G[V_t]$ where $V_t:=\bigcup_{i=1}^{i=t}X_i$ for any $t \in V(\cP)$. Thus, the coloring set $\{0,1,2\}$ induces a canonical meaning with the color $0$ is when the vertex is in the partial solution, the color $1$ is assigned when the vertex is not in the partial solution but to be paired later in the DP, and the color $2$ is assigned when the vertex doesn't belong to the partial solution  but has degree one in $G_t - S_t$. Since, we have to ensure the minimum property of $S_t$ we further define a cost function $\bc[t,f]$ the minimum size of a set $S_t\subseteq V_t$ such that the following two properties hold:
\begin{enumerate}
\looseness-1
\item $S_t\cap X_t=f^{-1}(0)$, i.e. the set of vertices of $X_t$ that belong to the partial solution.
\item Degree of every vertex in $G_t-S_t$ is at most 1.
\end{enumerate}
\looseness-1
We show the $\bc[\cdot,\cdot]$ could be dynamically constructed. Since the maximum possible colorings $f$ of a bag $X_t$ is bounded by $3^{|X_t|}$ the running time is further bounded.

The exact-exponential algorithm of the problem which involves finding the minimum set $S \subseteq V(G)$ such that $G[V-S]$ contains only components of size exactly 2, follows similar branching and reducing strategy as devised for the parameterized problem $\textsf{IND}$. We use the key results proven for the parameterized to analyse this case and provide a worst-case solution.


\medskip
\noindent
{\bf Overview of the paper.} 
In \secref{sec:prelim} we recall basic definitions and set up notations.
We provide a self-contained section for the FPT algorithm and the corresponding necessary results in \secref{sec:ind}.
In \secref{sec:extend}, we state the exact-exponential problem and provide the exact-exponential algorithm. 



\looseness -1
\section{Preliminaries}\label{sec:prelim}
Let $\mathbb{N}$ denote the set of positive integers $\{0,1,2,\dots\}$. For any non-zero $t\in \mathbb{N}$, $[t]:=\{1,2,\dots,t\}$.
For a set $\{v\}$ containing a single element, we simply write $v$. A vertex $u\in V(G)$ is said to be incident on an edge $e\in E(G)$ if $u$ is one of the endpoints of $e$. A pair of edges $e,e'\in E(G)$ are said to be adjacent if there is a vertex $u\in V(G)$ such that $u$ is incident on both $e$ and $e'$. For any vertex $u\in V(G)$, by $N(u)$ we denote the set of neighbors of $u$ i.e. $N(u):=\{v\in V(G)\mid uv\in E(G)\}$. We use the notation $N[u]$ to denote the union $u \cup N(u)$, where $d(u)$ denotes the degree of the vertex $u$.
For any subgraph $X\subseteq G$, by $N(X)$ we denote the set of neighbors of vertices in $X$ outside $X$, i.e. $N(X):=\paren{\bigcup_{u\in X}N(u)}\setminus X$. 
An induced subgraph on $X\subseteq V(G)$ is denoted by $G[X]$.

A \emph{path} $P$ is a graph, denoted by a sequence of vertices $v_1v_2\dots v_t$ such that for any $i,j\in [t]$, $v_iv_j\in E(P)$ if and only if $|i-j|=1$. 
The \emph{length} of a path is the number of edges in the path.

\medskip
\noindent
{\bf Path Decomposition.}\cite{parameterizedbook} A path decomposition of a graph $G$ is a sequence $\cP = (X_1, X_2,\cdots,X_r)$ of bags where $X_i \subseteq V(G)$ for each $i \in \curlybracket{1,2,\cdots,r}$, such that the following conditions hold:
\begin{itemize}
    \item[$\bullet$] $\bigcup_{i =1}^r X_i= V(G)$. In other words, every vertex of $V(G)$ is in at the least one of the bag. 
    \item[$\bullet$] For every $uv \in E(G)$, there is an $\ell \in \curlybracket{1,2,\cdots,r}$ such that the bag $X_{\ell}$ contains both $u$ and $v$.
    \item[$\bullet$] For every $u \in V(G)$, if $u \in X_i \cap X_k$ for some $i \le k$, then $u \in X_j$ also for each $j$ such that $i\le j \le k$. In other words, the indices of the bags containing $u$ form an interval in $\curlybracket{1,2,\cdots,r}$.
\end{itemize}
The width of a path decomposition $(X_1, X_2,\cdots,X_r)$ is $\max_{1\le i \le r} |X_i| - 1$. The pathwidth of a graph $G$, denoted by ${\bf pw} (G)$, is the minimum possible width of a path decomposition of $G$. The reason for subtracting 1 in the definition of the width of the path decomposition is to ensure that the path width of a path with at least one edge is 1, not 2.

\medskip
\noindent
{\bf Fixed Parameter Tractability.} A {\em parameterized problem} $\Pi$ is a subset of $\Sigma^* \times \mathbb{N}$. A parameterized problem $\Pi$ is said to be \emph{fixed parameter tractable}(\textsc{FPT}) if there exists an algorithm that takes as input an instance $(I, k)$ and decides whether $(I, k) \in \Pi$ in time $f(k)\cdot n^c$, where $n$ is the length of the string $I$, $f(k)$ is a computable function depending only on $k$ and $c$ is a constant independent of $n$ and $k$. 

A \emph{data reduction rule}, or simply, reduction rule, for a parameterized problem $Q$ is a function $\phi:\Sigma^*\times\mathbb{N}\to \Sigma^*\times\mathbb{N}$ that maps an instance $(I,k)$ of $Q$ to an equivalent instance $(I',k')$ of $Q$ such that $\phi$ is computable in time polynomial in $|I|$ and $k$. We say that two instances of $Q$ are \emph{equivalent} if $(I,k)\in Q$ if and only if $(I',k')\in Q$; this property of the reduction rule $\phi$, that it translates an instance to an equivalent one, is referred as the \emph{safeness} of the reduction rule.

A fixed-parameter algorithm based on branch-and-reduce strategy consists of a collection of reduction rules and branching rules. 
The branching rules are used to recursively solve the smaller instances of the problem with smaller parameter. We analyze each branching rule and use the worst-case time complexity over all branching rules as an upper bound of the running time. 
We represent the execution of a branching algorithm via search tree. The root of a search tree represents the input of the problem, every child of the root represents a smaller instance reached by applying a branching rule associated with the instance of the root. One can recursively assign a child to a node in the search tree when applying a branching rule. Notice that we do not assign a child to a node when applying a reduction rule. The running time of a branching algorithm is usually measured by the maximum number of leaves in its corresponding search tree.
Let $b$ be any branching rule. When rule $b$ is applied, the current instance $(G,k)$ is branched into $s\ge 2$ instances $(G_i,k_i)$ where $|G_i| \le |G|$ and $k_i = k - t_i$. Notice that fixed-parameter algorithms return “No” when the parameter $k \le 0$. We call $\textbf{b} = (t_1,t_2,\cdots, t_s)$ the branching vector of branching rule $b$. This can be formulated in a linear recurrence:
$T(k) \le T(k-t_1)+T(k-t_2)+\cdots+T(k-t_s)$, where $T(k)$ is the number of leaves in the search tree depending on the parameter $k$. The running time of the branching algorithm using only branching rule $b$ is $O(\textit{poly}(n)\cdot T(k)) = O^*(c^k)$ where $c$ is the unique positive real root of $x^k - x^{k-t_1}- x^{k-t_2}-\cdots-x^{k-t_s} = 0$~\cite{exact_book}. The number $c$ is called the branching number of the branching vector $(t_1,t_2,\cdots, t_s)$.

\section{Faster \textsf{FPT} algorithm for \textsf{Induced Matching}}\label{sec:ind}
In this section, we would present the fixed parameter tractable algorithm for \textsf{Deletion To Induced Matching}. First, we construct a set of branching rules for problem instances with at the least one vertex with degree $\ge 4$. The branching rules are based on a novel observation. Using these rules, we get a problem instance $(G',k')$ from $(G,k)$ such that any vertex $u \in V(G')$ has degree bounded by 3. We provide an efficient solution for the instance $(G',k')$ using ideas from path decomposition.
We would provide a dynamic programming algorithm for instances with maximim degree 3. Overall, the running time of the algorithm is $O^*(1.748^k)$ and does computation in polynomial space. In contrast to the work of Xiao et al.~\cite{AIMpara}, our solution uses many fewer branching rules and uses novel techniques from path decomposition of a graph.\\

We denote by $S$ a potential solution of size at most $k$. Now we proceed to write the reduction rules and branching rules. Note that while stating a reduction rule or a branch rule we assume that previous rules are not applicable. Furthermore, each rule changes the instance from $(G,k)$ to $(G',k')$ where
$|V(G')|<|V(G)|$ and $k'\leq k$, but we use the same symbols $(G,k)$ to represent the modified instance. We note the following key observation:   
\begin{observation}\label{contained}
If $\exists u,v\in V(G)$ such that $N[v]\setminus u\subseteq N(u)$, then there exists a solution $S$ such that either $u\in S$ or $\{u,v\}$ $\not\in S$.
\end{observation}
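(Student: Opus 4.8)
The plan is to establish an exchange (swap) argument: I would show that whenever a solution $S$ deletes $v$ but keeps $u$, the set $S' = (S\setminus\{v\})\cup\{u\}$ is again a solution of the same size, namely $|S'|=|S|\le k$. Since $u\in S'$, this lands in the first alternative of the stated disjunction, so applying the swap once to an arbitrary solution yields one satisfying ``either $u\in S$ or neither of $u,v$ lies in $S$.'' First I would unpack the hypothesis $N[v]\setminus u\subseteq N(u)$: assuming $u\ne v$, from $v\in N[v]\setminus\{u\}$ it gives $uv\in E(G)$, and crucially it gives $N(v)\setminus\{u\}\subseteq N(u)$, i.e. every neighbour of $v$ other than $u$ is also a neighbour of $u$. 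Throughout I treat ``solution'' in the operative sense used by the dynamic program, namely that every vertex of $G-S$ has degree at most one (each component has size at most two).

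The heart of the argument is to verify that $G-S'$ still has maximum degree at most one, and here I would split the vertices present in $G-S'$ into two groups. For a present vertex $x\ne v$, the only neighbours whose membership status changed between $G-S$ and $G-S'$ are $u$ (now deleted) and $v$ (now restored); hence its degree changes by $\id{x\in N(v)}-\id{x\in N(u)}$. The containment hypothesis is exactly what makes this quantity non-positive: any $x\ne u$ with $x\in N(v)$ satisfies $x\in N(u)$ as well, so restoring $v$ can at worst replace the neighbour $u$ that $x$ just lost and never create a new excess. Thus $\deg_{G-S'}(x)\le \deg_{G-S}(x)\le 1$.

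It remains to bound the degree of $v$ itself in $G-S'$, which I expect to be the crux. Here I would observe that the present neighbours of $v$ in $G-S'$ are precisely the neighbours of $v$ other than $u$ that survived in $G-S$ (restoring $v$ and deleting $u$ change no vertex of $N(v)\setminus\{u\}$), and by the hypothesis these all lie in $N(u)$; since $u$ had at most one present neighbour in the valid solution $G-S$, the vertex $v$ can inherit at most one present neighbour, so $\deg_{G-S'}(v)\le 1$. Combining the two cases shows $S'$ is a valid solution with $u\in S'$, which completes the exchange and hence the observation.

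The main obstacle is precisely this degree bookkeeping around the swap: one must check that reintroducing $v$ raises nobody's degree (controlled by the containment bounding $N(v)\setminus\{u\}$ inside $N(u)$) while deleting $u$ can only lower degrees. I would also flag the ``exactly two'' phrasing in the problem statement: the swap can turn a size-two component (the old partner of $u$) into an isolated vertex, so the exchange is cleanest under the at-most-two / degree-$\le 1$ reading that the dynamic program actually enforces; I would therefore fix the solution concept explicitly at the outset to keep the argument sound.
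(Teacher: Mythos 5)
Your exchange $S' = (S\setminus\{v\})\cup\{u\}$ and the degree bookkeeping around it are correct as far as they go, and they reproduce the first half of the paper's own argument (the paper's sub-case in which $u$'s partner in $G-S$ is a common neighbour of $u$ and $v$). The genuine gap is the case you flag in your last paragraph and then define away. The target class here is an induced matching: every component of $G-S$ must have size exactly two, i.e.\ $G-S$ is $1$-regular, not merely of maximum degree one. This is the reading the abstract uses and the one Branching Rule~\ref{one} relies on when, in the $u\notin S$ branch, it \emph{pairs} $u$ with $v$. Under that reading your single swap fails precisely when $u$'s unique partner $w$ in $G-S$ satisfies $w\notin N(v)$: after the swap, $w$ loses its only surviving neighbour $u$ and gains nothing, while $v$ is restored with no surviving neighbour at all (its neighbours other than $u$ all lie in $N(u)$, whose only surviving member in $G-S$ was $w\notin N(v)$). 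So $G-S'$ acquires two isolated vertices and $S'$ is not a solution. Switching to the at-most-two semantics because the dynamic program's \emph{partial} solutions use degree at most one does not rescue this; the colour-$1$ vertices there are explicitly ``to be paired later,'' so the final object is still $1$-regular.

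The paper closes this case with a second, different exchange that your proposal is missing: when $w\notin N(v)$, take $S'=(S\setminus\{v\})\cup\{w\}$ instead. Since $\{u,w\}$ was a component of $G-S$, every neighbour of $w$ other than $u$ already lies in $S$, so deleting $w$ isolates nobody; restoring $v$ hands $u$ the partner $v$, and $v$ has no other surviving neighbour, again because $N(v)\setminus\{u\}\subseteq N(u)$ and nothing of $N(u)$ survives in $G-S'$ besides $v$ itself. This exchange lands in the \emph{second} alternative of the disjunction, $\{u,v\}\cap S'=\emptyset$, which is exactly the alternative the branching rule needs to be realizable. In short, the observation requires both exchanges, chosen according to whether $u$'s partner is adjacent to $v$; acknowledging the exactly-two semantics and then proving the statement only under the relaxed semantics leaves the observation unproven for the problem as defined.
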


Indeed if $S$ is a solution such that $u\notin S$ and $v\in S$ such that $\exists w \in N(u)\cap N(v)$ and $w \notin S$, then $S':=\paren{S\setminus v}\cup u$ is also a solution where $v \not\in S$. If $\exists x \not\in N(v)$ such that $x \not\in S$, then note that all the neighbours of $x$ outside that of $u$ has to be deleted. Thus, $S':=\paren{S\setminus v}\cup x$ where $x \in S$ and $v \not\in S$. Hence, even for $u\notin S$, we get a solution such that either $u\in S$ or $\{u,v\}$ $\not\subseteq S$.
\\
\\
\obref{contained} suggests the following branching rules where we assume that the degree of $u$ is at the least 4 i.e. $d(u) \geq 4$, where the branching subtree is rooted at $u$ :
\begin{brule}\label{one}
If  $\exists v \in V(G)$ such that $N[v] \subseteq N[u]$. Then, make nodes in the branch tree for the following cases: $u \in S$ or $u \notin S$. In the second case, we pair the vertex $u$ with $v$ and delete at least $d(u)-1$ many vertices. The recurrence relation is $T(k)\leq T(k-1)+T(k-d+1)$ whose solution is bounded by $T(k)\leq 1.4656^k$.
\end{brule}
\noindent
From now on we assume that for every vertex $u$ of degree at least $4$, we have that for every vertex $v\in N(u)$, $|N(v)\cap \overline{N[u]}|\geq 1$. If $u\notin S$, then one neighbor $v\in N(u)$ does not belong to $S$. In that case, $N(v)\cap \overline{N[u]}\subseteq S$. This is the basis for the following branching rule:
\begin{brule}\label{two}
Let $u$ be a vertex of degree $d\geq 4$ such that $\forall v\in N(u)$\,  $|N(v)\cap \overline{N[u]}|\geq 1$.
Create $d+1$ branch nodes: one for the case when $u\in S$ and one for each vertex $v\in N(u)$ such that 
$\paren{N(u)\cup N(v)}\setminus \{u,v\}\subseteq S$ where at least $d$ vertices are deleted.  Then, the recurrence 
relation is $T(k)\leq T(k-1)+d\cdot T(k-d)$ whose solution is bounded by $T(k)\leq 1.748^k$.
\end{brule}
\noindent
Using the above Branching Rules \ref{one}-\ref{two}, we can reduce any \textsf{IND} instance $(G,k)$ to $(G',k')$ such that $G'$ has maximum degree 3 for any vertex $u \in V(G')$. Now, we would provide the construction of a path decomposition for $(G',k')$. First, we state the following result by Fomin and H\o{}ie~\cite{fominho}:
\begin{theorem}[Fomin and H\o{}ie]\label{fedor}
For any $\epsilon > 0$, there exists an integer $n_{\epsilon}$ such that for every graph $G$ with maximum vertex degree at most three and with $|V(G)| > n_{\epsilon}$ , \textsf{pw}$(G)\leq  (\frac{1}{6} + \epsilon)|V (G)|$. Furthermore, such a decomposition can be obtained in polynomial time.
\end{theorem}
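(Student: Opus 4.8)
The plan is to prove the bound by exhibiting a recursive construction of a path decomposition whose width is controlled by a strong balanced-separator property of subcubic graphs. Since pathwidth is monotone under taking subgraphs and equals the maximum of the pathwidths of the connected components, I would first reduce to the case where $G$ is connected and $3$-regular: vertices of degree at most two and small components contribute only a lower-order additive term that can be absorbed into the $\epsilon|V(G)|$ slack once $|V(G)| > n_\epsilon$. The heart of the argument is then a separator theorem for cubic graphs of Monien--Preis type, which guarantees a balanced partition of $V(G)$ into two parts of nearly equal size whose interface has at most roughly $\tfrac{1}{6}|V(G)|$ crossing edges (up to an $o(|V(G)|)$ error); taking one endpoint of each crossing edge yields a balanced vertex separator $S$ of comparable size.

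Given such a separator $S$ splitting $G-S$ into $A$ and $B$, the naive recursive construction — recursively decompose $G[A]$ and $G[B]$, concatenate the two path decompositions, and insert $S$ into every bag — yields the recurrence $\textsf{pw}(G) \le |S| + \max(\textsf{pw}(G[A]),\textsf{pw}(G[B]))$. Following the heavier branch and summing the geometric series of separator sizes over the $O(\log |V(G)|)$ recursion levels gives only $\textsf{pw}(G) \le \tfrac{1}{3}|V(G)| + o(|V(G)|)$, losing a factor of two. To recover the sharp constant $\tfrac{1}{6}$ I would replace this wasteful step by an amortized layout argument: instead of re-inserting the entire separator of every ancestor into every bag, I track, for each vertex, the interval of the linear layout over which it must remain active, and charge its contribution to the width against the bounded number of incident edges (at most three, by the degree hypothesis). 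Summing these charges globally rather than level-by-level is what should convert the $\tfrac{1}{3}$ recursion into the claimed $\tfrac{1}{6}$ bound.

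For the algorithmic claim I would verify that each ingredient is constructive in polynomial time: the balanced separators are computable in polynomial time, the recursion tree has polynomially many nodes, and assembling the bags is a single linear-time sweep; the threshold $n_\epsilon$ and the $o(|V(G)|)$ error terms are fixed once $\epsilon$ is chosen and never enter the running time. I expect the main obstacle to be precisely the step that yields the tight constant $\tfrac{1}{6}$ rather than $\tfrac{1}{3}$: this factor-two gap must be closed either by a sharper balanced-vertex-separator bound specific to cubic graphs or by the amortized analysis above, and in the latter case one must argue that the separator vertices introduced at different recursion levels overlap in their active intervals in a controlled way and that the degree-three constraint limits the simultaneous load, so that the worst-case width is governed by a single global counting bound rather than by the sum over levels. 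A secondary but necessary technical point is the careful bookkeeping of the lower-order terms needed to certify the existence of the threshold $n_\epsilon$ for every fixed $\epsilon > 0$.
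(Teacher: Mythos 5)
The paper does not prove this statement: Theorem~\ref{fedor} is imported verbatim from Fomin and H\o{}ie~\cite{fominho} and used as a black box, so the only meaningful comparison is between your sketch and their actual argument. You have correctly identified the right external ingredient (a Monien--Preis type balanced bisection theorem for cubic graphs, giving a cut of size at most $(\frac{1}{6}+\epsilon)|V(G)|$), and you have correctly diagnosed that the obvious recursive construction --- bisect, recurse on both sides, and keep the separator alive across the concatenated layout --- only yields $\mathrm{pw}(G)\leq(\frac{1}{3}+O(\epsilon))|V(G)|$ after summing the geometric series of cut sizes over the recursion levels.

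However, the step that turns $\frac{1}{3}$ into $\frac{1}{6}$ is the entire content of the theorem beyond the citation of Monien--Preis, and your proposal does not carry it out. The ``amortized layout argument'' is named but never executed: there is no concrete charging scheme, no invariant describing which separator vertices are simultaneously active at a given position of the layout, and no verification that the global count is actually halved. As sketched, the amortization is doubtful: in a recursive-bisection layout every position is crossed by cut edges coming from all of its ancestor levels, and the total number of such edges is already about $\frac{1}{3}|V(G)|$; the degree-three hypothesis bounds the number of crossing edges \emph{per} active vertex by three, which yields a lower bound on the number of active vertices in terms of crossing edges, not the upper bound you would need. You explicitly flag this factor-two gap as the main obstacle, which is honest, but it means the proposal is a plan rather than a proof --- and the missing step is precisely where all of Fomin and H\o{}ie's work lies. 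A secondary slip: the claim that vertices of degree at most two ``contribute only a lower-order additive term'' is false as written, since there may be $\Omega(|V(G)|)$ of them; the correct reduction is to contract them away, prove the bound for the $3$-regular core, and then re-insert them into the decomposition at an additive-constant cost per bag --- exactly the contraction bookkeeping this paper itself performs when deriving its own $O(k)$ pathwidth bound from Theorem~\ref{fedor}.
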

Now, we would prove a lemma which ascertains a key property on the number of vertices with degree 3 for an \textsf{IND} problem instance $(G,k)$ to be a \textsf{YES}-instance.

\begin{lemma}\label{bound}
Let $G$ be a graph of maximum degree 3. Then $(G,k)$ is a \textsf{yes}-instance if and only if there are at 
most $2.5k$ vertices of degree 3 in $G$.
\end{lemma}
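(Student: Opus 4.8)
The plan is to treat the two directions separately, putting essentially all of the genuine work into the forward implication (that a feasible deletion set forces few degree-$3$ vertices) and being careful about the converse, which is the delicate point. Throughout write $D_3\subseteq V(G)$ for the set of vertices of degree exactly $3$ and put $n_3=|D_3|$. The forward direction is the one the algorithm actually consumes: it licenses rejecting immediately whenever $n_3>2.5k$.

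For the forward direction, suppose $(G,k)$ is a \textsf{yes}-instance and fix a solution $S$ with $|S|\le k$, so every vertex of $G-S$ has degree at most $1$. Partition $D_3$ as $A=S\cap D_3$ and $B=D_3\setminus S$; trivially $|A|\le|S|\le k$. The crux is to bound $|B|$ by a double count of the edges having one endpoint in $B$ and the other in $S$. Each $v\in B$ has degree $3$ in $G$ but degree at most $1$ in $G-S$, so at least two of its neighbours lie in $S$, and since $B\cap S=\emptyset$ these are genuine $B$–$S$ edges; hence there are at least $2|B|$ such edges. From the $S$-side, each vertex of $S$ has degree at most $3$ in $G$ and so is incident to at most three of these edges, giving at most $3|S|$. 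Chaining the two estimates on the same quantity yields $2|B|\le 3|S|$, i.e. $|B|\le\tfrac32|S|$, and therefore $n_3=|A|+|B|\le|S|+\tfrac32|S|=\tfrac52|S|\le\tfrac52k=2.5k$. The constant is sharp: take $|S|=k$ degree-$3$ vertices whose $3k$ incident edges are spread two-per-vertex over $\tfrac32k$ further degree-$3$ vertices, each of the latter keeping one private surviving neighbour, so that $n_3=2.5k$.

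The converse is where the main obstacle lies, and it is also where the standing convention from the excerpt (that a rule is stated assuming all previous reduction rules have been applied exhaustively) has to be invoked. As literally phrased over \emph{arbitrary} maximum-degree-$3$ graphs the implication ``$n_3\le 2.5k\Rightarrow$ \textsf{yes}-instance'' is not self-evident: a long bare path or cycle has $n_3=0$ yet needs roughly a third of its vertices deleted before every component is a single edge, so it is a \textsf{no}-instance for small $k$. Consequently the converse can only be argued once the reduction rules have removed the components built solely from vertices of degree at most $2$, so that every surviving component is anchored by degree-$3$ vertices. The plan would then be to construct a solution directly from $D_3$, deleting a chosen subset of $D_3$ together with a bounded number of incident vertices so as simultaneously to reduce all degree-$3$ vertices to degree $\le 1$ and sever the residual low-degree paths hanging off them, and to verify that $n_3\le 2.5k$ leaves enough budget for $|S|\le k$. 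I expect the hardest step to be the charging that converts ``$n_3$ small'' into ``a deletion set of size at most $k$ exists,'' because each deleted vertex must be made to fix several degree-$3$ vertices at once to respect the factor $2.5$; the tightness example above shows that in the extremal regime there is no slack, so this accounting is exactly where the argument is most fragile and must lean on the post-reduction structure rather than on degree bounds alone.
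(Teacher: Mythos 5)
Your forward direction is correct and is essentially the paper's own argument, carried out more carefully. The paper asserts that every vertex of $G-S$ has degree exactly $1$ and that every degree-$3$ vertex outside $S$ sends exactly two edges into $S$; your version (degree \emph{at most} $1$, hence \emph{at least} two neighbours in $S$) is the statement that is actually true and is all that the count needs. The double count of $B$--$S$ edges giving $2|B|\le 3|S|$ and hence $n_3\le |S|+\frac{3}{2}|S|\le 2.5k$ is the same computation the paper performs, just with the two sides of the inequality made explicit.

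On the converse you have correctly diagnosed a real problem, and you should be aware that the gap lies in the lemma itself, not merely in your write-up. The paper's purported proof of the backward implication reads: if the number of degree-$3$ vertices exceeds $2.5k$ then $G-S$ contains a vertex of degree greater than $1$ for every $S$ of size $k$ --- but that is just the contrapositive of the forward direction and says nothing about the ``if'' part of the equivalence. Your counterexample (a long path or cycle has no degree-$3$ vertices yet is a \textsf{no}-instance for small $k$, since roughly a third of its vertices must be deleted) shows the ``if'' direction is false for general maximum-degree-$3$ graphs, and the paper supplies no reduction rules that would remove such components before this lemma is invoked: Branching Rules~\ref{one} and~\ref{two} act only on vertices of degree at least $4$ and leave bare paths untouched. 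The charging argument you sketch for the converse therefore cannot be completed from what is available, and you are right to flag it as the fragile point. The saving grace is that nothing downstream uses the false direction: \lemref{width} and the dynamic program of \thmref{dp} only need ``\textsf{yes}-instance $\Rightarrow$ at most $2.5k$ degree-$3$ vertices'' (equivalently, the licence to reject when $n_3>2.5k$, with the DP deciding the remaining instances exactly). The correct repair is to weaken the lemma to the single implication you proved rather than to pursue the converse.
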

\begin{proof}
Let $(G,k)$ be a \textsf{yes}-instance of \textsf{IND} and $S$ be a solution of size at most $k$. Since the degree of any vertex is at most 3, the number of edges between $S$ and $V\setminus S$ is at most $3k$. Now, the degree of any vertex in $G-S$ is  1. Hence, for every vertex of degree 3 in $V\setminus S$, the number of edges to $S$ is 2. Hence, the maximum number of degree 3 vertices is bounded by $k+\frac{3}{2}k=2.5k$.

If number of degree 3 vertices is more than $2.5k$, then for any set $S$ of size $k$, $G-S$ contains a vertex of degree greater than 1 and hence $(G,k)$ is a \textsf{no}-instance.
\end{proof}
Using \thmref{fedor} and \lemref{bound}, we can bound the path width of the path decomposition of an \textsf{IND} instance $(G,k)$ to $\frac{2.5k}{6}+2$. We state the result here with detailed proof in \appref{appendix: pathwidth}.
\begin{lemma}\label{width}
There exists a polynomial time algorithm that given an \textsf{IND} instance $(G,k)$ obtains a path decomposition of $G$ of width at most $\frac{2.5k}{6}+2$. 
\end{lemma}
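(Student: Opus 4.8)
The plan is to \emph{not} apply Theorem~\ref{fedor} to $G$ directly: its guarantee $(\frac16+\epsilon)|V(G)|$ is useless here, since $|V(G)|$ need not be bounded by any function of $k$ (a yes-instance may already be a huge induced matching with nothing to delete). Instead I would apply it only to the ``core'' of degree-$3$ vertices, whose number is controlled by Lemma~\ref{bound}, and then re-insert the low-degree part almost for free. First note that if $G$ has more than $2.5k$ vertices of degree $3$ then by Lemma~\ref{bound} it is a no-instance and no decomposition is required; so I may assume the set $D$ of degree-$3$ vertices satisfies $|D|\le 2.5k$.

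Next I would build an auxiliary graph $H$ recording only how the core vertices are linked. Since $G$ has maximum degree $3$, every connected component of $G-D$ has maximum degree at most $2$ and is therefore a path or a cycle. I set $V(H)=D$ and, for each maximal path of $G$ whose internal vertices all lie in $V(G)\setminus D$ and whose endpoints lie in $D$, I place a corresponding edge between those endpoints in $H$ (ignoring multiplicities and self-loops, which do not affect pathwidth). Each vertex of $D$ retains degree at most $3$ in $H$, so $H$ is a max-degree-$3$ graph on at most $2.5k$ vertices. Applying Theorem~\ref{fedor} to $H$ yields, in polynomial time, a path decomposition $\mathcal{P}_H$ of width at most $(\frac16+\epsilon)|V(H)|\le (\frac16+\epsilon)\cdot 2.5k$; when $|V(H)|\le n_\epsilon$ the pathwidth is an absolute constant and the claim is immediate.

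Then I would expand $\mathcal{P}_H$ into a decomposition of $G$ by re-inserting the degree-$\le 2$ vertices. For each edge $\{a,b\}$ of $H$ arising from a path $a-p_1-\cdots-p_\ell-b$ in $G$, the decomposition $\mathcal{P}_H$ has a bag $B$ containing both $a$ and $b$ (the edge-cover property); I insert after $B$ the short sequence $B\cup\{p_1\},\, B\cup\{p_1,p_2\},\, B\cup\{p_2,p_3\},\dots, B\cup\{p_{\ell-1},p_\ell\},\, B\cup\{p_\ell\}$, which covers every edge of the path, keeps each new $p_i$ in a contiguous block of two bags, and enlarges each bag by at most two vertices. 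Pendant paths attached to a single core vertex are routed identically from a bag containing their attachment point, and components of $G-D$ that are vertex-disjoint from $D$ (free paths and cycles) have pathwidth at most $2$ and are simply concatenated to the end of the sequence (disjoint components only take the max). All steps are polynomial, and the resulting width is at most $(\frac16+\epsilon)\cdot 2.5k+2$, i.e. the claimed $\frac{2.5k}{6}+2$ up to the arbitrarily small slack $\epsilon\cdot 2.5k$ inherited from Theorem~\ref{fedor}.

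The main obstacle I expect is the re-insertion step: one must verify that routing each contracted path through a bag already containing both of its endpoints raises the width by only the additive constant $2$ rather than accumulating across the (possibly many) paths, and that the interval and edge-cover conditions are preserved throughout—this holds because each detour carries only its own two path-vertices on top of an original bag, so at any position the width is at most $\mathbf{pw}(H)+2$. The remaining delicate point is the bookkeeping of the $\epsilon$- and constant-terms needed to match $\frac{2.5k}{6}+2$ exactly, which is where the small-$H$ case (Theorem~\ref{fedor} not applicable) and the choice of $\epsilon$ must be handled explicitly.
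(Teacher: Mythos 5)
Your proposal is correct and follows essentially the same route as the paper: suppress the degree-$\le 2$ vertices to obtain a max-degree-$3$ graph on at most $2.5k$ vertices, apply Theorem~\ref{fedor} to that core, and re-expand each suppressed path by threading its vertices two at a time through a bag containing its endpoints, which raises the width by only an additive $2$. The paper realizes the core via edge contractions with red-colored edges/vertices rather than your explicit auxiliary graph $H$, and inserts the introduce/forget bags just before the second endpoint is introduced rather than after a common bag, but these are cosmetic differences; your handling of parallel paths, pendant paths, and free components, and your explicit bookkeeping of the $\epsilon$ slack, are if anything slightly more careful than the paper's.
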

Before, we show our algorithm for solving an \textsf{IND} instance with every vertex $u \in G$ having maximum degree 3, we would give an algorithm to solve an \textsf{IND} instance in 
$O^*(3^p)$ time where $p$ is the pathwidth of a path decomposition for $G$:
\begin{theorem}\label{dp}
There exists an algorithm that given a path decomposition of $G$ solves \textsf{IND} in $O^*(3^p)$ time where $p$ is the width of the decomposition.
\end{theorem}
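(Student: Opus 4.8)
The plan is to design a left-to-right dynamic program over a nice path decomposition $\cP = (X_1,\dots,X_r)$ of $G$, using the three-colouring formulation $f : X_t \to \{0,1,2\}$ already described in the introduction. First I would convert the given decomposition into a \emph{nice} path decomposition of the same width $p$, in which consecutive bags differ by exactly one vertex (an \emph{introduce} bag adds a single vertex, a \emph{forget} bag removes one), with the first and last bags empty. This is a standard polynomial-time transformation and increases neither the width nor the number of bags beyond a polynomial factor, so it is safe for the $O^*(3^p)$ claim. For each bag $X_t$ and each colouring $f$ I maintain the table entry $\bc[t,f]$, defined as the minimum size of a set $S_t \subseteq V_t$ (where $V_t = \bigcup_{i\le t} X_i$) satisfying the two stated invariants: $S_t \cap X_t = f^{-1}(0)$, and every vertex of $G_t - S_t$ has degree at most $1$. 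The semantic meaning of the colours drives the transitions: colour $0$ means ``in the solution'', colour $2$ means ``already matched inside $G_t$ (degree exactly one and its partner is settled)'', and colour $1$ means ``outside the solution but still unmatched, to be paired with a neighbour introduced later''.

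Next I would specify the recurrences for the three bag types. For an \textbf{introduce} bag $X_t = X_{t-1}\cup\{v\}$, the new vertex $v$ has no neighbours in $V_{t-1}$ outside $X_{t-1}$ (this is exactly the property guaranteed by a path decomposition, so all of $v$'s already-seen neighbours lie in $X_{t-1}$). If $f(v)=0$ I set $\bc[t,f] = \bc[t-1,f|_{X_{t-1}}] + 1$. If $f(v)\in\{1,2\}$ then $v\notin S_t$ contributes nothing to the size, but I must enforce local consistency: $v$ may be adjacent in the bag only to vertices coloured $0$ or to exactly one vertex coloured $1$ (its future partner cannot already be used up), and I forbid an edge between two colour-$2$ vertices or any configuration that would give a retained vertex degree $\ge 2$; an inconsistent colouring gets value $+\infty$. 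For a \textbf{forget} bag $X_t = X_{t-1}\setminus\{v\}$, I take a minimum over the compatible colourings of $v$ in the child: $\bc[t,f] = \min_{c\in\{0,1,2\}} \bc[t-1, f\cup\{v\mapsto c\}]$, but I only allow $c=1$ to survive into the minimum if $v$ will \emph{never} need a partner again, i.e. $v$ must already have a neighbour that pairs it off; since $v$ is being forgotten, a vertex still coloured $1$ (unmatched) would end $G-S$ with an isolated retained vertex, so I exclude $c=1$ at forget steps (only $c\in\{0,2\}$ are valid to forget). I would also, at each introduce step, ``promote'' the colour of a vertex from $1$ to $2$ once its unique retained neighbour appears, encoding the act of forming a matched edge. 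The answer is read off at the final empty bag $X_r$ as $\bc[r,\emptyset]$, accepting iff this value is at most $k$.

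The correctness argument splits into the usual two directions. For soundness I would show by induction on $t$ that any table entry $\bc[t,f]$ that is finite corresponds to an actual set $S_t$ meeting both invariants, and conversely (completeness) that for any valid $S_t$ the colouring it induces on $X_t$ yields a table value at most $|S_t|$; the colour of a retained vertex is determined by whether its matching partner has already been introduced (colour $2$) or not yet (colour $1$). The key invariant to carry through is that colour $1$ records precisely the retained vertices whose single permitted neighbour has not yet been seen, which is well-defined because the interval property of path decompositions guarantees that once a vertex is forgotten it shares no future bag, hence no future edge. For the running time, there are at most $3^{|X_t|} \le 3^{p+1}$ colourings per bag and $O(r)$ bags, and each transition is computed in time polynomial in the bag size, giving the stated $O^*(3^p)$ bound with polynomial space if the tables are processed one bag at a time.

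I expect the main obstacle to be getting the colour-transition bookkeeping exactly right at introduce bags, specifically ensuring that the degree-at-most-one constraint is enforced \emph{locally} within the bag without double counting or missing edges. The subtlety is that an edge $uv$ is ``checked'' only at the unique transition where both endpoints are present, so the consistency test at an introduce bag must simultaneously verify that $v$ does not create a second retained edge at any already-present neighbour and must correctly flip a partner's colour from $1$ to $2$; mishandling the case where a newly introduced colour-$1$ vertex is adjacent to an existing colour-$1$ vertex (which should immediately form a matched pair and become two colour-$2$ vertices, or be rejected if either already has a partner) is the easiest place to introduce an error, so I would treat these local adjacency cases with explicit care.
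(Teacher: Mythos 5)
Your proposal is correct and follows essentially the same route as the paper: a dynamic program over a nice path decomposition with the identical three-colour states $\{0,1,2\}$, the same table $\bc[t,f]$ and invariants, the same introduce/forget recurrences with the $1\to 2$ "promotion" of a partner's colour, and the same $3^{p+1}$ per-bag accounting. The one substantive divergence is at forget nodes, where you exclude colour $1$ so that no retained vertex is ever left unmatched; the paper's written recurrence minimizes over all three colours and its invariant only demands degree at most one in $G_t - S_t$, so your version is actually the more faithful one for the "component size \emph{exactly} 2" requirement of \textsf{IND}.
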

\begin{proof}
We would assume that we are given a nice path decomposition $(\mathcal{P}, \{X_t\}_{t\in V(\mathcal{P})})$ of 
width $p$ of $G$. We define $V_t:=\bigcup_{i=1}^{i=t}X_i$ and $G_t:= G[V_t]$ for any $t \in V(\mathcal{P})$. We define a coloring of a bag $X_t$ as a
mapping $f:X_t\to \{0,1,2\}$ assigning three different colors to vertices of the bag, which is described extensively as follows:
\begin{itemize}
    \item[$\bullet$] If $f(v)=0$, then $v$ must be contained in the partial solution in $G_t$.
    \item[$\bullet$] If $f(v)=1$, then $v$ does not belong to the partial solution, but it is isolated in $G_t-S_t$ where $S_t$ is the partial solution in $G_t$. These are the vertices which are to be paired later in the DP (dynamic programming of the algorithm).
    \item[$\bullet$] If $f(v)=2$, then $v$ does not belong to the partial solution, but it has degree one in $G_t-S_t$. Note that $d(v)\leq 1$ in $G[X_t-f^{-1}(0)]$.
\end{itemize}
For a subset $X\subseteq V(G)$, consider a coloring $f:X\to \{0,1,2\}$. For a vertex $v\in V(G)$ and a color 
$\alpha\in \{0,1,2\}$ we define a new coloring $f_{v\rightarrow \alpha}:X\cup \{v\}\to \{0,1,2\}$ as follows:
$$f_{v\rightarrow \alpha}(x)=
\begin{cases}
f(x) & \mbox{ when } x\neq v,\\
\alpha & \mbox{ when } x=v.
\end{cases}
$$
Similarly, we define another coloring $f_{v\leftarrow \alpha}:X\to \{0,1,2\}$ as follows:\\
$$f_{v\leftarrow \alpha}(x)=
\begin{cases}
f(x) & \mbox{ when } x\neq v,\\
\alpha & \mbox{ when } x=v.
\end{cases}
$$
For a coloring $f$ of $X$ and $Y\subseteq X$, we use $f|_Y$ to denote the restriction of $f$ to $Y$.
For a coloring $f$ of $X_t$, we denote by $\bc[t,f]$ the minimum size of a set $S_t\subseteq V_t$ such that the following two properties hold:
\begin{enumerate}
\item $S_t\cap X_t=f^{-1}(0)$, i.e. the set of vertices of $X_t$ that belong to the partial solution.
\item Degree of every vertex in $G_t-S_t$ is at most 1.
\end{enumerate}
Since the decomposition is \tt{nice}, we specify the values of the recursive function $\bc\bracket{\cdot,\cdot}$ to various node types as shown below:\\

\textbf{Leaf node}: For a leaf node $t$ we have that $X_t=\emptyset$. Hence there is only one, empty coloring,
and we have $\bc[t,\emptyset]=0$.\\

\textbf{Introduce node}: Let $t$ be an introduce node with a child $t'$ such that $X_t=X_{t'}\cup \{v\}$ for
some $v\notin X_{t'}$. We write the recursive formulae for various cases. Note that we put infinity as a value
for $\bc[t,f]$ whenever a feasible solution is not possible.\\
$$\bc[t,f]=
\begin{cases}
1+\bc[t',f|_{X_{t'}}] & \mbox{ if } f(v)=0\\
\bc[t',f|_{X_{t'}}] & \mbox{ if } f(v)=1 \mbox{ and } \not\exists u\in X_t \mbox{ s.t. } \\
&\mbox{                                       }uv\in E(G) \mbox{ and } f(u) \neq 0 \\ 
\bc[t',f_{w\leftarrow 1}|_{X_{t'}\setminus w}] & \mbox{ if } f(v)=2, \exists w \mbox{ s.t. } f(w) = 1 \mbox{ and } wv \in E(G)\\ 
& \mbox{ and }\not\exists u \mbox{ s.t. } f(u)=2 \mbox{ and } uv\in E(G) \\
\infty & \mbox{ otherwise}.
\end{cases}
$$ 
\\
Note that when $vw\in E(G)$ and $f(w)=1$ in $X_{t'}$, then at the \textbf{introduce node} $X_t$, we have $f(w)=2$.\\

\textbf{Forget node}: Let $t$ be a forget node with a child $t'$ such that $X_t=X_{t'}\setminus \{v\}$ for
some $v\in X_{t'}$. Since multiple colorings of $X_{t'}$ can lead to the same coloring of $X_t$, it suffices
to keep the one that leads to minimum size solution. 
\begin{center}
$\bc[t,f]=\min \curlybracket{\bc[t', f_{v\rightarrow 0}],\,\bc[t', f_{v\rightarrow 1}],\,\bc[t', f_{v\rightarrow 2}]}$
\end{center}
If an introduced vertex is colored 2 then, the coloring is  valid unless there is a vertex $w \in X_{t'}$ such that the restriction \(f|_{X_{t'}}\)
satisfies $f|_{X_{t'}}(w)=1$. 

Since there are at most $3^{|X_t|}$ number of colorings $f$ for any bag $X_t$, the time to process any node
is at most $3^{p+1}$. Hence, \textsf{IND} can be solved in $O^*(3^p)$ time.
\end{proof}
\begin{lemma}\label{last}
\textsf{IND} on graphs of maximum degree 3 can be solved in $O^*(1.581^k)$ time.
\end{lemma}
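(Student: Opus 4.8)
The plan is to treat this lemma as an assembly of the two machine parts already built, namely the path-decomposition bound of \lemref{width} and the dynamic-programming routine of \thmref{dp}, followed by a short arithmetic check that the resulting exponent evaluates to the claimed base. The statement is restricted to graphs of maximum degree $3$, which is exactly the regime in which both of those prior results apply, so no further structural work on the graph should be needed.

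First I would take an \textsf{IND} instance $(G,k)$ with $G$ of maximum degree $3$ and apply \lemref{width} to obtain, in polynomial time, a path decomposition $\mathcal{P}$ of $G$ of width at most $p := \frac{2.5k}{6} + 2$. (If one prefers to make the no-instance case explicit: should $G$ contain more than $2.5k$ vertices of degree $3$, \lemref{bound} already certifies that $(G,k)$ is a no-instance and we stop; otherwise \lemref{width} guarantees the bounded-width decomposition.) Next I would convert $\mathcal{P}$ into a \emph{nice} path decomposition of the same width $p$ by the standard polynomial-time transformation, and feed it to the algorithm of \thmref{dp}. Since that algorithm runs in $O^*(3^{p})$ time on a decomposition of width $p$, the overall running time is $O^*\!\bigl(3^{\frac{2.5k}{6}+2}\bigr)$, and the whole procedure remains polynomial space because \thmref{dp} stores only a $\bc[\cdot,\cdot]$ table of size $3^{O(p)}$ per bag.

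Finally I would discharge the arithmetic: the additive constant $2$ in the exponent contributes only the multiplicative factor $3^{2}=9$, which the $O^*$ notation absorbs, and $3^{2.5/6}=3^{5/12}\le 1.581$. Hence $3^{\frac{2.5k}{6}+2}=O^*(1.581^{k})$, giving the claimed bound. The main point of substance is deliberately not inside this lemma: the genuine difficulty — bounding the pathwidth by the count of degree-$3$ vertices rather than by $|V(G)|$, via the Fomin--H\o{}ie decomposition of \thmref{fedor} together with the color-coding grouping of bags — has already been delegated to \lemref{width}. The only thing requiring care here is to confirm that the $+2$ additive slack in the width does not inflate the exponential base, which it does not, since it affects only a constant prefactor.
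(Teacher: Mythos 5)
Your proposal is correct and follows essentially the same route as the paper: invoke \lemref{width} to get a (nice) path decomposition of width at most $\frac{2.5k}{6}+2$, run the dynamic program of \thmref{dp}, and observe that $3^{\frac{2.5k}{6}+2}=O^*\bigl(3^{5k/12}\bigr)=O^*(1.581^k)$. The extra care you take with the nice-decomposition conversion and the additive constant is consistent with, and slightly more explicit than, the paper's own two-line argument.
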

\begin{proof}
Run the algorithm of \lemref{width} to get a nice path decomposition of width at most $\frac{2.5k}{6}+2$ and 
then use the algorithm of \thmref{dp}. Hence, the running time of the algorithm is bounded by 
$O^*(3^{\frac{2.5k}{6}+2})=O^*(1.581^k)$. 
\end{proof}
We have provided a faster $\textsf{FPT}$ solution for problem instances with maximum degree 3. Using the branching rules \ref{one}-\ref{two} and \lemref{last}, we would state and proof the main claim of the section in which we show a fixed-parameter tractable algorithm of running time $O^*(1.748^k)$ for any \textsf{IND} instance $(G,k)$. In the following, we state the claim below and then we complete the proof. 
\begin{theorem}[Main Theorem]\label{ind}
\textsf{IND} can be solved in $O^*(1.748^k)$-time and polynomial space.
\end{theorem}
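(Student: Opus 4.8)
The plan is to read \thmref{ind} as the composition of two phases already assembled in the section: a branching phase that eliminates every vertex of degree at least $4$, followed by the subcubic solver of \lemref{last} applied at the leaves. Concretely, I would first note that whenever the current instance contains a vertex $u$ with $d(u)\ge 4$, exactly one of Branching Rule~\ref{one} or Branching Rule~\ref{two} applies: the former when some $v$ satisfies $N[v]\subseteq N[u]$, the latter otherwise (as argued immediately after \obref{contained}, where one shows that in the remaining case every $v\in N(u)$ has $N(v)\cap\overline{N[u]}\neq\emptyset$). Applying these rules exhaustively and exploring the resulting search tree $T^*$, every leaf carries an instance $(G',k')$ of maximum degree $3$ with $k'\le k$, which I solve by \lemref{last} in $O^*(1.581^{k'})$ time. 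Correctness needs no new argument: it follows from the safeness of the two rules together with the correctness of the dynamic program of \thmref{dp} underlying \lemref{last}, once one checks—via \obref{contained}—that the two rules jointly cover every branch in which a degree-$\ge 4$ vertex is present.

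The core of the proof is the running-time bound, which I would establish by strong induction on $k$ along $T^*$. Let $T(k)$ bound the time in the subtree rooted at a node whose instance has parameter $k$. If that instance already has maximum degree $3$, then $T(k)=O^*(1.581^{k})\le O^*(1.748^{k})$, seeding the induction since $1.581<1.748$. Otherwise the node branches, and the dominating recurrence is that of Branching Rule~\ref{two}, namely $T(k)\le T(k-1)+d\cdot T(k-d)$ with $d\ge 4$ (Branching Rule~\ref{one} yields the strictly smaller branching number $1.4656$). Substituting the inductive hypothesis $T(k')\le C\cdot 1.748^{k'}$ for $k'<k$ reduces the claim to the numerical inequality
\[
1.748^{\,d-1}+d\ \le\ 1.748^{\,d}\qquad\text{for all }d\ge 4 .
\]
At $d=4$ this holds with (near-)equality precisely because $1.748$ is the branching number of the vector $(1,4,4,4,4)$, i.e. the positive root of $x^{4}-x^{3}-4=0$; for $d>4$ the right-hand side grows exponentially in $d$ while the left-hand side grows only linearly, so the inequality is strict and the worst case is indeed $d=4$. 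This closes the induction and gives the $O^*(1.748^{k})$ time bound, with the constant $1.748$ tight at the branching step rather than at the leaves.

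For the space bound, the branching phase is explored depth-first, so only a single root-to-leaf path of $T^*$—of depth $O(k)$, with each rule computable in polynomial time and space—is ever stored, costing polynomial space. \textbf{The main obstacle is the base case:} the tabulated dynamic program behind \thmref{dp} and \lemref{last} stores $3^{p+1}$ colorings per bag, i.e. $O^*(1.581^{k})$ space, which is not polynomial, and unlike a tree decomposition a path decomposition admits no balanced-separator recursion that would trade this table for polynomial working space without blowing up the time. I therefore expect the honest route to polynomial space to require replacing the leaf DP by a \emph{direct polynomial-space branching procedure on subcubic instances}: any such branching whose branching number stays below $1.748$ is absorbed into the bound of the previous paragraph, while its depth-first execution keeps the working set polynomial. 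Combining the polynomial-space branching phase with a polynomial-space subcubic solver then yields the claimed $O^*(1.748^{k})$-time, polynomial-space algorithm; pinning down that subcubic solver (and verifying its branching number) is the step I would treat as the crux.
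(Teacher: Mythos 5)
Your two-phase structure and time analysis match the paper's proof in substance. The paper bounds the number of search-tree leaves at parameter level $s$ by $1.748^{k-s}$ (via induction on a truncated tree) and then sums $\sum_{s=0}^{k}1.748^{k-s}\cdot 1.581^{s}\le O^*(1.748^k)$; your direct induction on the recurrence $T(k)\le T(k-1)+d\cdot T(k-d)$ with base case $O^*(1.581^{k})$ is an equivalent packaging of the same computation, and your checks that $d=4$ is the worst case and that Branching Rules~\ref{one} and~\ref{two} jointly cover every instance containing a degree-$\ge 4$ vertex (via \obref{contained}) are correct.

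The gap is in the space bound, which is half of the theorem's statement. You correctly observe that the dynamic program of \thmref{dp} stores a table of $3^{p+1}=O^*(1.581^{k})$ entries per bag, and you are right that a balanced-separator recursion on a path decomposition would inflate the running time to roughly $3^{O(p\log n)}$. But your proposed repair --- replacing the leaf solver by ``a direct polynomial-space branching procedure on subcubic instances'' with branching number below $1.748$ --- is never constructed, and its existence is precisely what you would need to prove; as written, the polynomial-space claim of \thmref{ind} is not established by your argument. For comparison, the paper does not take your route: it keeps the path-decomposition DP and asserts that polynomial space follows by encoding each state with an ID (the branching order in the search tree, and the bag colorings in the DP) and recomputing rather than storing. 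That assertion is itself terse --- naively, recomputing $\bc[t,f]$ without memoization across the forget nodes costs $3^{\Theta(n)}$ time --- so your instinct that this is the delicate point is sound; but to complete the proof you must either justify the paper's ID/recomputation scheme or actually exhibit and analyze the polynomial-space subcubic solver you invoke.
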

\begin{proof}
Algorithm solves \textsf{IND} in two phases. In phase one, it applies some branching rules to 
eliminate at the least degree 4 vertices.  In phase two, it solves \textsf{IND} on graphs of maximum degree 3.

After exhaustively applying  Branching Rules 
\ref{one} and \ref{two} on an \textsf{IND} instance $(G,k)$, we can assume that for the reduced instance $(G',k')$ the maximum degree of a vertex in the 
reduced graph $G'$ is 3. 
At this point we run the algorithm of \lemref{last} to solve \textsf{IND}
optimally. Hence, if $(G,k)$ is
a \textsf{YES}-instance of \textsf{IND}, then in at least one branch, we get a solution of size at most $k$.

Now we proceed to the running time analysis of the algorithm. First, we write the recurrence relations for each of 
the branch rules we have used so far. 
\begin{align*}
    \textnormal{Branch rule}~\ref{one}\quad& T(k)\leq T(k-1)+T(k-d+1) & T(k)\leq 1.465^k\\
    \textnormal{Branch rule}~\ref{two}\quad& T(k)\leq T(k-1)+d\cdot T(k-d) & T(k)\leq 1.748^k
\end{align*}
Consider the branch tree at the end of the branch phase. Let $s$ be the parameter with which the algorithm of 
Lemma \ref{last} is called. Clearly, $0< s\leq k$. It is easily shown by induction on $k$ that the number of 
leaves with parameter $s$ is bounded by the worst case branching in the branch tree which is $1.748^{k-s}$: 
indeed the base  case $k=1$ is trivial. Consider that the statement is true for any value less than $k$. 
Let $T$ represent the branch tree. Then consider all the nodes with parameter $s$ in $T$. If there is a path on
which there is no such node then, delete the subtree rooted at the node that has an edge to the path that leads 
to a node with parameter $s$. Now delete all subtrees rooted at nodes with parameter $s$ except the root. In this
truncated tree defined as $T^*$, leaves have parameter $s$. Now, change $k$ to $k'=k-s$. Due to this parameter change, all 
leaves are with parameter 1. Since, $k'<k$, by induction, the number of nodes with parameter value 1 is bounded
by $1.748^{k'}=1.748^{k-s}$. Hence, the total running time of the algorithm is bounded as follows \\
\begin{equation*}
    \sum\limits_{s=0}^{s=k}1.748^{k-s}\times 1.581^{s}\times n^{O(1)}\leq \sum\limits_{s=0}^{s=k}1.748^s\times 1.748^{k-s}\times n^{O(1)}\leq O^*(1.748^k)
\end{equation*}

Space used by the algorithm can be reduced to a polynomial by defining an ID for each node in the search tree that encodes what branching order to follow and the current branch. One way to do this is to write the vertices to be branched on in a lexicographic order and then write the current vertex being branched on. Since the depth of the search tree is $O(n)$, we need only store $O(n)$ IDs. In the dynamic program on the path decomposition, we can employ the same idea by using the coloring of the bags as the ID. In this case as well, we only require polynomial space.
\end{proof}

\section{Exact algorithm for Induced Matching}\label{sec:extend}
\looseness -1
In this section, we would discuss an exact-exponential version of the \textsf{Deletion to Induced Matching} problem. We provide an exact-exponential algorithm of running time $O^*(1.5098^n)$ and polynomial space which uses the key ideas from the parameterized version of \textsf{IND}. Although, the running time is weaker compared to the $O^{*}(1.4231^n)$-time algorithm of Xiao et al.~\cite{AIMexact}, the ideas are much simpler. For an improved running time, we would use the monotone local search \cite{monotone} for exact problems and show an $O^*\paren{1.427^{n+o(n)}}$ running time solution. We state the exact-exponential version of the problem, called \textsf{EXTEND}, as follows:\\

\noindent
\fbox{\parbox{\textwidth-\fboxsep}{
\textsf{Deletion to Induced Matching} (EXTEND)\\
\textbf{Input:} A graph $G$ on $n$ vertices and $m$ edges.\\
\textbf{Task:} determine the minimum cardinality subset $S\subseteq V(G)$ such that the maximum size of any component in $G-S$ is exactly 2.
}}\\
\smallskip

Notice that since we get a fixed parameter tractable algorithm of running time of $O^*(1.748^k)$ for \textsf{IND} thus the running time of the exact-exponential problem is bounded by $O(1.748^n)$ where we solve for $n$ instead of $k$. We would  use the stated \obref{contained} (cf \secref{sec:ind}) to find the minimum possible set $S$ such that the task is fulfilled. Note that the observation has a particular property that it doesn't involve increase in the cardinality of a solution set $S'$ if a solution $S$ exists.\\

We approach the problem in the similar manner where the main idea is to exhaust all the vertices $v$ such that $d(v) \ge$ 4. We state the following observation:
\begin{observation}\label{next}
If $\exists u,v$ $\notin  S$ such that $uv \in E(G)$ then we can delete  $\{u,v\}$ where $G$ $\rightarrow $ $G'$ such that n $\rightarrow$ $n-2$.  
\end{observation}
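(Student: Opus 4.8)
The plan is to treat the hypothesis of \obref{next} as a structural constraint forced on \emph{any} feasible \textsf{EXTEND} solution, and then argue that collapsing the matched pair yields an equivalent, strictly smaller instance. First I would invoke the definition of \textsf{EXTEND}: for a feasible $S$, every connected component of $G-S$ has size \emph{exactly} $2$. If $u,v\notin S$ and $uv\in E(G)$, then the edge $uv$ survives in $G-S$, so $u$ and $v$ lie in a common component; since that component has exactly two vertices, it must equal $\{u,v\}$. This is the core of the argument and it is immediate from the problem statement.

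Second, I would read off the forced consequences of this structural fact. Because $\{u,v\}$ is a complete size-$2$ component of $G-S$, neither $u$ nor $v$ may retain a surviving neighbor outside $\{u,v\}$, so $\paren{N(u)\cup N(v)}\setminus\{u,v\}\subseteq S$. Hence the fate of $u$, of $v$, and of all their remaining neighbors is completely determined in the branch where $u,v\notin S$: the pair $\{u,v\}$ is matched and kept, and every external neighbor is deleted. This mirrors the reasoning behind Branching Rule~\ref{two} in the parameterized setting, but now used as a reduction once the branch has committed to $u,v\notin S$.

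Third, I would establish safeness by exhibiting an equivalence between optimal solutions of $G$ that keep $u$ and $v$ and optimal solutions of the reduced graph $G':=G-\{u,v\}$. In one direction, if $S$ is optimal for $G$ with $u,v\notin S$, then since $u,v\notin S$ we have $G'-S=(G-S)-\{u,v\}$, which simply deletes the isolated component $\{u,v\}$ and leaves every other component of $G-S$ intact, so $S$ remains feasible for $G'$ of the same cardinality. In the converse direction, any feasible $S'$ for $G'$ that contains $\paren{N(u)\cup N(v)}\setminus\{u,v\}$ extends to a feasible solution for $G$: reattaching $u$ and $v$ creates the single new component $\{u,v\}$ of size $2$ and no edges to kept vertices. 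Since $u$ and $v$ contribute nothing to $|S|$, the minimum solution sizes coincide, and the vertex count drops from $n$ to $n-2$, which is exactly the claimed reduction $G\rightarrow G'$.

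The step I expect to be the main obstacle is the bookkeeping of the forced neighbors inside the converse direction of the safeness argument: the equivalence holds only for residual solutions that already include $\paren{N(u)\cup N(v)}\setminus\{u,v\}$ in $S$, so I must make precise how the reduced instance $G'$ records these forced deletions (either by carrying them as a commitment or by also removing the committed neighbors) and verify that the remaining graph can still be completed into size-$2$ components, so that no optimum is lost in either direction. A secondary point, required for the exact-exponential analysis of \secref{sec:extend}, is to confirm that this $n\rightarrow n-2$ step composes correctly with the degree-based branching so that the overall measure decreases by the amount needed for the stated running time.
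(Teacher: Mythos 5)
Your proposal is correct; note that the paper itself offers no proof of \obref{next} at all, treating it as immediate from the requirement that every component of $G-S$ has size exactly $2$, and your argument is precisely the natural formalization of that: a surviving edge $uv$ forces $\{u,v\}$ to be an entire component, forces $\paren{N(u)\cup N(v)}\setminus\{u,v\}\subseteq S$, and therefore lets the pair be removed from the instance with $n\rightarrow n-2$. The bookkeeping concern you raise at the end is real but is resolved exactly where you suspect: the observation is only ever invoked inside Branching Rules~\ref{newone} and~\ref{newthree}, whose branches already place the forced external neighbors into $S$, so the reduced graph after each branch is $G$ minus both the committed deletions and the matched pair, and no separate ``commitment'' mechanism is needed.
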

\noindent
Observations \ref{contained} and \ref{next} suggest the following branching rules where for any rooted node $u$ we have $d(u)$ $\geq 4$ :
\begin{brule}\label{newone}
If  $\exists v \in V(G)$ such that $N[v] \subseteq N[u]$. Then, make nodes in the branch tree for the following cases: $u \in S$ or $u,v$ $\notin S$ using \obref{contained}. 
In the second case we delete at least 3 vertices as d(u) $\ge$ 4 along with $uv$ edge using \obref{next}.
The recurrence relation is $T(n)\leq T(n-1)+T(n-5)$ which solves to $1.3247^n$.

\end{brule}

From now on we assume that for every vertex $u$ of degree at least 4, we have that for every vertex $v\in N(u)$, $|N(v)\cap \overline{N[u]}|\geq 1$. If $u\notin S$, then one neighbor $v\in N(u)$ does not belong to $S$. In that case, $N(v)\cap \overline{N[u]}\subseteq S$. This is the basis for the following branching rule:
\begin{brule}\label{newthree}
Let $u$ be a vertex of degree $d\geq 4$ such that $\forall v\in N(u)$  $|N(v)\cap \overline{N[u]}|\geq 1$.
Create $d+1$ branch node: one for the case when $u\in S$ and one for each vertex $v\in N(u)$ such that 
$N(u)\cup N(v)\setminus \{u,v\}\subseteq S$ where at least d vertices are deleted along with
$\{u,v\}$ using \obref{next}.  Then, the recurrence 
relation is $T(n)\leq T(n-1)+d\cdot T(n-d-2)$  whose solution is bounded by $T(n)\leq 1.5098^n$.
\end{brule}
\looseness -1
\noindent
Using the above Branching Rules \ref{newone}-\ref{newthree}, we can reduce any \textsf{EXTEND} instance $(G)$ to $(G')$ such that $G'$ has maximum degree 3 for any vertex $u \in V(G')$. Now, we would prove a theorem which ascertains an algorithm on an \textsf{EXTEND} instance where every vertex of $G$ has bounded degree 3. 
\begin{theorem}\label{newdp}
There exists an algorithm that given an instance G of max degree 3 solves \textsf{EXTEND} in $O^*(3^{\frac{n}{6}})$ time.
\end{theorem}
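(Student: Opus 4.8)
The plan is to mirror the dynamic-programming construction of \thmref{dp}, but for the \textsf{EXTEND} problem, and then combine it with \thmref{fedor} to obtain the claimed running time. First I would invoke \thmref{fedor}: since the input graph $G$ has maximum degree at most $3$, for any $\epsilon>0$ we can compute in polynomial time a path decomposition $\cP$ of width at most $(\frac{1}{6}+\epsilon)|V(G)|$. Because $|V(G)|=n$, this gives a decomposition of width $p\le(\frac{1}{6}+\epsilon)n$. I would then convert $\cP$ into a \emph{nice} path decomposition of the same width in polynomial time, as is standard.

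Next I would reuse the coloring-based dynamic program of \thmref{dp} essentially verbatim. The only conceptual difference between \textsf{IND} and \textsf{EXTEND} is the objective: \textsf{IND} is a decision problem with a parameter $k$, whereas \textsf{EXTEND} asks for the \emph{minimum} cardinality of a deletion set $S$ such that every component of $G-S$ has size exactly $2$. But the DP table $\bc[t,f]$ of \thmref{dp} already stores the \emph{minimum size} of a partial solution $S_t\subseteq V_t$ consistent with the coloring $f$ of the bag $X_t$, using colors $0$ (in the solution), $1$ (outside the solution, currently isolated, to be paired later), and $2$ (outside the solution, already matched to one neighbor). I would carry over the leaf, introduce, and forget recurrences unchanged. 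At the unique leaf node with the empty bag, I read off $\bc[t_{\mathrm{root}},\emptyset]$, which is precisely the minimum size of a valid $S$ for \textsf{EXTEND}. The correctness of this DP follows by the same bag-by-bag induction used implicitly in \thmref{dp}: the three color classes encode exactly the local states of the ``every vertex has degree at most $1$ in $G-S$'' constraint, and any vertex that is forgotten while still colored $1$ would witness an unmatched isolated vertex, which is excluded by the introduce/forget bookkeeping.

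For the running-time bound, I would argue exactly as in the final paragraph of \thmref{dp}. Each bag $X_t$ admits at most $3^{|X_t|}\le 3^{p+1}$ colorings, and each of the $O(n)$ nodes of the nice decomposition is processed in time polynomial in the number of colorings. Hence the total time is $O^*(3^{p})=O^*\paren{3^{(\frac{1}{6}+\epsilon)n}}$. Choosing $\epsilon$ small enough absorbs into the $O^*$ notation, yielding the claimed $O^*(3^{\frac{n}{6}})$ bound. Polynomial space is retained because the DP over a path decomposition processes bags sequentially and only needs to store the table for the current and previous bag, each of polynomial size once we fix the exponential dependence into the running time rather than space.

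The main obstacle I anticipate is not the running-time arithmetic but verifying that the DP recurrences of \thmref{dp} genuinely enforce the \textsf{EXTEND} feasibility condition. In particular, \textsf{EXTEND} requires every surviving component to have size \emph{exactly} $2$, i.e. no isolated surviving vertices and no surviving paths of length $\ge 2$; the color-$2$ state must be reached for a vertex only after it is matched to a unique surviving neighbor, and a vertex must never be forgotten while colored $1$. I would therefore take care to confirm that the introduce-node case handling (promoting a color-$1$ neighbor to color $2$ upon introducing its partner, and rejecting configurations where a vertex would acquire two surviving neighbors) together with the forget-node restriction correctly rules out both isolated survivors and degree-$\ge 2$ survivors, so that the table indeed computes a valid induced-matching deletion set and not merely a ``degree at most $1$'' set.
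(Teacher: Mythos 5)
Your proposal is correct in substance but reaches the bound by a somewhat different route than the paper. The paper does not apply Fomin--H\o{}ie directly to $G$: it treats the \textsf{IND} machinery as a black box, running the algorithm of \lemref{width} together with \thmref{dp} for every $k\in[n]$, and using \lemref{bound} to argue that the contracted graph underlying \lemref{width}'s decomposition has at most $\min(2.5k,n)\le n$ vertices, so every run works with width at most $\frac{n}{6}+2$; the factor $n$ from looping over $k$ vanishes into the $O^*$. You instead invoke \thmref{fedor} once on $G$ itself, run the coloring DP a single time as an optimization (reading the minimum off the final empty bag), and skip the detour through \lemref{bound}, the contraction of \lemref{width}, and the decision-to-optimization loop. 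Your version is cleaner and avoids the $n$-fold repetition; the paper's version buys the convenience of reusing the already-established \textsf{IND} pipeline verbatim, and its decomposition carries only an additive $+2$ overhead rather than a multiplicative $(1+6\epsilon)$ slack. You are also right that the real burden is verifying that the DP of \thmref{dp} enforces the ``exactly $2$'' condition (in particular that no vertex is forgotten while colored $1$); the paper's proof of \thmref{newdp} does not revisit this either, so it is an inherited concern rather than one you introduce. One small inaccuracy to fix: the $3^{\epsilon n}$ slack from \thmref{fedor} is exponential and cannot be absorbed by $O^*$, which suppresses only polynomial factors; you should either state the bound as $O^*\paren{3^{(\frac{1}{6}+\epsilon)n}}$ for every fixed $\epsilon>0$, or route through a decomposition with only additive overhead as \lemref{width} does.
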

\begin{proof}
Note that in \lemref{width}, we find a path decomposition of the \textsf{IND} instance $(G,k)$ with width $p$ such that $p$ is bounded by $\frac{2.5k}{6}+2$. But note that using \lemref{bound}, $G$ can have at max $2.5k$ many vertices of degree 3 which is indeed bounded by $n$.
Now, if we run the algorithm in \thmref{dp} for all possible \textsf{IND} instance $(G,k)$ $\forall k \in \bracket{n}$ where the path decomposition of $G$ has pathwidth $p$ bounded by $\paren{\frac{n}{6}+2}$. Note, using this procedure the running time is bounded by $n*3^{(\frac{n}{6}+2)}*n^c$ where c is a constant from \thmref{dp}. Thus, overall the running time for \textsf{EXTEND} is bounded by $O(3^{\frac{n}{6}})$ (i.e. $O(1.2009^n)$) .  
\end{proof}
Now, we state the main theorem with detailed proof referred to \appref{appendix: extend}.
\begin{theorem}\label{newind}
\textsf{EXTEND} can be solved in $O^*(1.5098^n)$ running time and polynomial space.
\end{theorem}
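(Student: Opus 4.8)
The plan is to mirror the structure of the parameterized Main Theorem (\thmref{ind}), combining the two new branching rules for high-degree vertices with the dynamic-programming algorithm of \thmref{newdp} for the residual max-degree-3 instance. First I would establish correctness: Branching Rules \ref{newone} and \ref{newthree} are exhaustive over the possible status of a vertex $u$ with $d(u)\geq 4$, since either $u\in S$, or $u\notin S$ in which case some neighbor $v\in N(u)$ must also lie outside $S$ (a component of size exactly 2 forces each non-deleted vertex to have a non-deleted partner), and \obref{contained}/\obref{next} together guarantee that the branches enumerated capture an optimal solution without overcounting the cardinality. Exhaustively applying these two rules terminates at an instance $G'$ of maximum degree $3$, where I invoke \thmref{newdp} to solve \textsf{EXTEND} optimally in $O^*(3^{n/6})=O^*(1.2009^n)$ time.

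Next I would carry out the running-time analysis. Each application of Branching Rule \ref{newone} gives the recurrence $T(n)\leq T(n-1)+T(n-5)$, whose branching number is $1.3247$, and Branching Rule \ref{newthree} gives $T(n)\leq T(n-1)+d\cdot T(n-d-2)$ with $d\geq 4$, whose worst case (attained at $d=4$) yields branching number $1.5098$. The dominating rule is therefore \ref{newthree}. As in the proof of \thmref{ind}, I would argue by induction that along any root-to-leaf path the branch tree produces at most $1.5098^{\,n-s}$ leaves whose residual instance has $s$ vertices remaining, and each such leaf is resolved by \thmref{newdp} in $O^*(1.2009^s)$ time. Summing over the cut level $s$,
\begin{equation*}
\sum_{s=0}^{n} 1.5098^{\,n-s}\times 1.2009^{\,s}\times n^{O(1)} \leq \sum_{s=0}^{n} 1.5098^{\,n-s}\times 1.5098^{\,s}\times n^{O(1)} \leq O^*(1.5098^n),
\end{equation*}
since $1.2009 < 1.5098$, which gives the claimed bound. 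The polynomial-space claim follows exactly as in \thmref{ind}: the search tree has depth $O(n)$ and each node can be encoded by a lexicographic branching ID, while the DP of \thmref{newdp} stores only the current bag coloring.

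The main obstacle I expect is the correctness bookkeeping of the branching rules in the \emph{minimization} (\textsf{EXTEND}) setting rather than the decision (\textsf{IND}) setting. In \thmref{ind} it suffices that \emph{some} optimal solution survives in \emph{some} branch; here I must additionally verify that no branch reports a set smaller than the true optimum, i.e. that deleting the $\{u,v\}$ pair via \obref{next} and the forced neighborhood sets via \obref{contained} never undercounts. The delicate point is \obref{next}'s guarantee that pairing two adjacent undeleted vertices does not inflate $|S|$ relative to an optimal solution, combined with ensuring the branches of Rule \ref{newthree} remain exhaustive when multiple neighbors of $u$ could serve as its partner. I would handle this by an exchange argument: given any optimal $S^\star$, if $u\notin S^\star$ then its unique partner $v$ in $G-S^\star$ satisfies $N(u)\cup N(v)\setminus\{u,v\}\subseteq S^\star$, so $S^\star$ is consistent with exactly one branch, establishing that the minimum over all branches equals the true optimum.
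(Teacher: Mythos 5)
Your proposal follows essentially the same route as the paper's proof: exhaustively apply Branching Rules \ref{newone} and \ref{newthree} to reach a maximum-degree-3 instance, solve it via \thmref{newdp}, and bound the total cost by the same induction-plus-summation argument $\sum_{s} 1.5098^{n-s}\cdot 1.2009^{s} = O^*(1.5098^n)$, with the identical polynomial-space bookkeeping. Your added care about correctness in the minimization setting (the exchange argument showing no branch undercounts) is a reasonable refinement the paper leaves implicit, but it does not change the approach.
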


\medskip
\noindent
{\bf Reduction in complexity using monotone local search} In the seminal work, Saket et al.~\cite{monotone} proposed the technique of monotone local search for designing exact exponential-time algorithms for subset
problems. Their main result is that a $c^kn^{O(1)}$ time
algorithm for the extension problem (e.g. parameterized problem) immediately yields a randomized algorithm for finding a solution of any size with running time $O\paren{\paren{2-\frac{1}{c}}^n}$. The algorithm could be derandomized to running time of $O\paren{\paren{2-\frac{1}{c}}^{n+o(n)}}$. In section \secref{sec:ind}, we show a $O^*(1.748^k)$ running time FPT algorithm for \textsf{IND}. Using monotone local search, the running time for \textsf{EXTEND} could be reduced as follows:
\begin{equation*}
    O^*\paren{\paren{2-\frac{1}{1.748}}^{n+o(n)}} = O^*\paren{1.427^{n+o(n)}} 
\end{equation*}
As stated in \cite{monotone}, this exact algorithm could be shown to run in superpolynomial space complexity.
\vspace{-1mm}
\section{Conclusion}\label{sec:con}
In this work, we provide a fixed parameter tractable algorithm to the deletion to induced matching problem. We provide a novel way to combine ideas from branching and path decomposition, which gives an efficient solution to this problem. Although Xiao et al.~\cite{AIMpara} provides a solution with similar running time, our work provides new insight into the problem. We further use our technique to present a much simpler $O^*(1.5098^n)$-time algorithm for \textsf{EXTEND}. We note that using similar ideas one could provide similar results for various exact version of $\ell$-\textsf{COC} (2-\textsf{COC} in our case). We believe that there could be further improvement to the running time for \textsf{IND} if the Branching Rule-\ref{two} (cf \secref{sec:ind}) is improved to a better bound or a path decomposition with smaller pathwidth for graphs with degree at most 4 is devised. The bottleneck could be analysing a matching case involving an induced bipartite subgraph. We leave the improvement for future work. 

%
%
%






\appendix
\section{APPENDIX}\label{appendix: all contents}
\subsection{Proof of \lemref{width}}\label{appendix: pathwidth}
\begin{proof}
By Lemma \ref{bound}, the maximum number of vertices of degree 3 in $G$ is at most $2.5k$. We obtain a graph
$G'$ by first recursively contracting all edges incident on two degree-2 vertices and then recursively contracting
edges incident on one vertex of degree at most 2. If the degree of a \emph{contracted} vertex is 3, we color the 
edge representing the contracted path red. If a vertex of degree 1 gets \emph{merged} to a vertex of degree 3, we 
color the vertex red. Clearly, the maximum degree of any vertex in $G'$ is at most 3 and $|V(G')|\leq 2.5k$. By 
Theorem \ref{fedor}, we can obtain a path decomposition $\mathcal{P}'$ of width at most $p'=\frac{2.5k}{6}$ in polynomial time.

Now we show how to obtain a path decomposition $\mathcal{P}$ of width at most $p'+1$ of $G$ using the 
decomposition $\mathcal{P}'$ in polynomial time. Suppose there is a red edge in bag $X_t$. Let $x,y$ be the 
endpoints of this edge. Without loss of generality, assume that $y$ is introduced after $x$ has been introduced 
and $X_{t'}$ be the bag just before $y$ is introduced. Clearly, $x\in X_{t'}$. Let $P=xu_1u_2\dots u_ay$ be the 
path in $G$ corresponding to the contracted edge is $G'$. We introduce the following bags in order in 
$\mathcal{P}'$ after $X_{t'}$: $X_{u_1}, X_{u_2}, X_{\overline{u_1}}, X_{u_3},X_{\overline{u_2}}\dots$ where 
$X_u$ denotes an introduce bag for $u$ and $X_{\overline{u}}$ denotes a forget node for $u$. 
Here $X_{u_1}=X_{t'}\cup \{u_1\}$. Note that the size of any bag is at most $p'+2$. This \emph{split} of a bag
containing a red edge is applied exhaustively. Note that there can be at most one red edge between two vertices
that also share an edge in the original graph. If there is a red edge between two vertices which are not adjacent
in the original graph, then in the above sequence when we introduce extra bags, the size of the bags does not 
exceed the original size. Now we consider the case when there are two $xy$-paths in $G$ that get contracted. 
If $xy\in E(G)$, then pathwidth of $G$ is at most 3 and hence the decomposition satisfies the requirements of the
lemma. Otherwise, we follow above procedure of introducing extra bags for the second path before $y$ is introduced.

For red vertices, the procedure is similar. Let $v$ be a red vertex in some bag $X_t$ and let $au_1u_2\dots v$ be
the path that got contracted to $v$. Let $X'$ be the bag just before $v$ is introduced. We insert the following bags
in the decomposition after $X': X_a, X_{u_1}, X_{u_2},X_{\overline{u_1}}\dots$. We apply this operation 
exhaustively for each red vertex. The width of the decomposition is at most $p'+2$. It is easily seen that
$\mathcal{P}$ is a valid path decomposition of $G$.
\end{proof}

\subsection{Proof of \thmref{newind}}\label{appendix: extend}
\begin{proof}
Algorithm solves \textsf{EXTEND} in two phases. In phase one, it applies some reduction rules to 
eliminate degree four vertices.  In phase two, it solves \textsf{EXTEND} on graphs of maximum degree 3. 

After exhaustively applying  Branching rules 
\ref{newone} and \ref{newthree}, we can assume that the maximum degree of a vertex in the 
remaining graph is 3. Hence, this instance can be seen as an instance 
of \textsf{EXTEND} $(G',n - n')$ with maximum degree three. At this point we run the algorithm of Theorem \ref{newdp} to solve \textsf{EXTEND}
optimally. 
Now we proceed to the running time analysis of the algorithm. First, we write the recurrence relations for each of 
the branch rules we have used so far. 
\begin{align*}
    \textnormal{Branch rule}~\ref{newone}\quad& T(n)\leq T(n-1)+T(n-5) &\hspace{-10mm} T(n)\leq 1.3247^n\\
    \textnormal{Branch rule}~\ref{newthree}\quad& T(n)\leq T(n-1)+d\cdot T(n-d-2) &\hspace{-2mm} T(n)\leq 1.5098^n
\end{align*}
Consider the branch tree at the end of the branch phase. Let $G'$ be the instance with which the algorithm of 
Theorem \ref{newdp} is called. Let $|V(G')| = n'$. It is easily shown by induction on $n$ that the number of 
leaves with $|V(G')| = n'$ is bounded by the worst case branching in the branch tree which is $1.5098^{n-n'}$: 
indeed in the base  case $n=1$ is trivial. Consider that the statement is true for any value less than $n$. 
Let $T$ represent the branch tree. Then consider all the nodes with $|V(G')| = n'$ in $T$. If there is a path on
which there is no such node then, delete the subtree rooted at the node that has an edge to the path that leads 
to a node with $|V(G')| = n'$. Now delete all subtrees rooted at nodes with $|V(G')| = n'$ except the root. In this
truncated tree $T^*$, leaves have $G'$ where $|V(G')| = n'$. Now, change $n$ to $n_0=n-n'$. Due to this  change, all 
leaves are with $|V(G')| = n'$. Since, $n_0<n$, by induction, the number of nodes with $|V(G')| = 1$ is bounded
by $1.5098^{n_0}=1.5098^{n-n'}$. Hence, the total running time of the algorithm is bounded by \\
\begin{equation*}
    \sum_{s=0}^{n}1.5098^{n-s}\times 1.2009^{n-s}\leq \sum\limits_{s=0}^{n}1.5098^s\times 1.5098^{n-s}\leq O(1.5098^n)
\end{equation*}
\end{proof}
Since the algorithm follows similar steps as presented for \textsf{IND} (cf \secref{ind}), space complexity of the algorithm for \textsf{EXTEND} could be bounded by using similar approach of defining ID for each node in the search tree and the coloring of the bags for path decomposition. All these operations require at most polynomial space, thus the space complexity is polynomial for \textsf{EXTEND}. 

\end{document}